\documentclass[nofootinbib,onecolumn,11pt]{revtex4}
\usepackage{amsmath}
\usepackage{amsthm}
\usepackage{amsfonts}
\usepackage{graphicx}
\usepackage{amssymb}
\usepackage{color}
\usepackage{rotating}

\usepackage{latexsym}
\newtheorem{theorem}{Theorem}

\newtheorem{lemma}[theorem]{Lemma}

\def\ll{\lambda}

\begin{document}
\title{On moments of the integrated exponential Brownian motion }
\author{F. Caravelli$\ ^{1,2,3}$, T. Mansour$\ ^4$, L. Sindoni$\ ^5$, S. Severini$\ ^3$\\\ \\
{\small $\ ^1$ Invenia Labs, 135 Innovation Dr., Winnipeg, MB R3T 6A8, Canada\\
$\ ^2$ London Institute for Mathematical Sciences, 35a South Street, London W1K 2XF, UK\\
$\ ^3$Department of Computer Science, University College London, Gower Street, London WC1E 6BT, UK \\
$\ ^4$ Department of Mathematics, University of Haifa, 31905 Haifa, Israel\\
$\ ^5$Max Planck Institute for Gravitational Physics, Albert Einstein Institute,\\Am M\"{u}hlenberg 1, 14467 Golm, Germany
}}
\begin{abstract}
We present new exact expressions for a class of moments for the geometric Brownian motion, in terms of determinants, obtained using a recurrence relation and combinatorial arguments for the case of a Ito's Wiener process. We then apply the obtained exact formulas to computing averages of the solution of the logistic stochastic differential equation via
a series expansion, and compare the results to the solution obtained via Monte Carlo.
\end{abstract}
\maketitle
\section{Introduction}
The geometric Brownian motion is the stochastic process described by the differential equation
\begin{equation}
df=\mu f dt+\sigma f dW_t,
\end{equation}
where $W_t$ is a Wiener process and $\mu, \sigma$ are constants describing the drift and the variance of the noise, respectively. The solution can be written as
\begin{equation}
f(W_t,t)=\exp\left\{\left(\mu-\frac{\sigma^2}{2}\right)t+\sigma W_t\right\}.
\end{equation}
Geometric Brownian motion is used for modelling many phenomena in a variety of contexts \cite{Gardiner}. A prominent role is played in financial applications, where the distribution of returns can be approximated by a log-normal distribution \cite{YorBook,Gardiner}, at least in specific regimes. 

For the computation of certain properties, it is necessary to compute the integral of $f(W,t)$
over a time interval
\begin{equation}
F[W,t]= \int_0^t f(W_s,s) ds.
\end{equation}
The evaluation of this functional is also involved in the solution of the geometric Brownian motion with logistic corrections. In general, averages of the form
\begin{equation}
\langle  G(F[W,t])  \rangle = \sum_{k=0}^\infty a_k \langle F[W,t]^k\rangle \equiv \sum_{k=0}^\infty a_k r_k.
\end{equation}
are quite common.
The evaluation of averages of powers of the integrated exponential Brownian motion, then, is
instrumental for the computation of these observables. 

Detailed studies of this functional and of its powers \cite{Rev1,Rev2} are already available in the literature.

In this paper, we will derive exact formulas for the evaluation of these integrals, under the assumption of the Ito formulation for the Wiener process. Similar results have been given in \cite{Yor1,YorBook}. Motivated by obtaining exact formulas for Asian options, in \cite{Yor1} Yor obtained an exact formula in terms of polynomials for the following moments:
\begin{equation}
\left\langle e^{\sigma W_t} \left(\int_0^t d \tilde t\ e^{W_{\tilde t}}\right)^n\right\rangle.
\end{equation}
Using Girsanov's theorem \cite{Girsanov}, one can derive a series of identities, in which the last is Bougerol's formula
\begin{equation}
\left\langle e^{\sigma W_t} \left(\int_0^t d\tilde t e^{W_{\tilde t}}\right)^n\right\rangle=\left\langle P_n\left(e^{2 B_t}\right)\right\rangle=4^n \left\langle \frac{\sinh( W_t)^{2n}}{\langle W_1^{2n}\rangle } \right\rangle,
\end{equation}
where
\begin{equation}
P_n(x)=\Gamma(n) \sum_{j=0}^n c_j z^j,
\end{equation}
and
\begin{equation}
c_j=\prod_{k\neq j\\0\leq k\leq n} \frac{2}{(\mu+j)^2-(\mu+k)^2}.
\end{equation}

In this work, we take a different route with the use of combinatorics. 
We prove a recurrence relation for the integrals involved at the $k$-th order in 
terms of integrals at the $(k-1)$-th order, and after resummation, we get an identity in terms of a 
determinant.
\section{Calculation of moments}
The central quantity of interest in the present paper is given by the average over the Wiener process $W_s$:
\begin{equation}
r_k(\mu,\sigma,t)\equiv\langle F[W,t] ^k \rangle.
\label{eq:cumulant}
\end{equation}
If we expand Eq. (\ref{eq:cumulant}), we obtain
\begin{equation}
\langle F[W,t] ^k \rangle=\int_0^t d {\tilde t }_k \cdots \int_0^t d {\tilde t }_1 \left\langle e^{\sum_{i=1}^k [(\mu-\frac{\sigma^2}{2}){\tilde t}_i+\sigma W_{{\tilde t}_i}]} \right\rangle.
\end{equation}
We will use the following formula due to the properties of integrals with Gaussian measure \cite{Salmhofer,Gardiner}, and in which we assume that $\langle W_{t} W_{t^\prime} \rangle$ is of the Ito type. This implies
\begin{eqnarray}
\left\langle e^{\sigma \sum_{i=1}^k W_{{\tilde t}_i}} \right\rangle &=&  e^{\frac{\sigma^2}{2} \sum_{i,j=1}^k \left\langle W_{{\tilde t}_i} W_{{\tilde t}_j}
\right\rangle} \nonumber \\
&=&  e^{\frac{\sigma^2}{2} \sum_{i,j=1}^k \text{min}({\tilde t}_i,{\tilde t}_j)}.
\label{eq:ito}
\end{eqnarray}
By using this property, we can now prove the following fact:
\begin{lemma}\label{lem1}
For the average over the Wiener process $W_s$ of Ito type, the following formula holds true:
\begin{equation}
r_k(\mu,\sigma,t)=\Gamma(n) \int_0^t e^{\mu {\tilde t}_k}\int_0^{{\tilde t}_{k-1}} e^{(\mu+ \sigma^2) {\tilde t}_{k-2}}\cdots \int_0^{{\tilde t}_{3}} e^{ (\mu+(k-2) \sigma^2) {\tilde t}_2} \int_0^{{\tilde t}_{2}} e^{ (\mu+(k-1) \sigma^2) {\tilde t}_1} d{\tilde t}_1 \cdots d{\tilde t}_k
\end{equation}
\end{lemma}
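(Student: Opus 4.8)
The plan is to start from the expanded form of $r_k(\mu,\sigma,t)$ over the hypercube $[0,t]^k$ and reduce it to a single ordered (simplex) integral by a symmetrization argument, which is where the factorial prefactor will originate. First I would insert the Ito expectation formula, Eq.~(\ref{eq:ito}), into the expanded expression, so that the entire $k$-fold integrand becomes the deterministic kernel $\exp\{(\mu-\frac{\sigma^2}{2})\sum_{i=1}^k \tilde t_i + \frac{\sigma^2}{2}\sum_{i,j=1}^k \min(\tilde t_i,\tilde t_j)\}$. The crucial observation is that both $\sum_i \tilde t_i$ and $\sum_{i,j}\min(\tilde t_i,\tilde t_j)$ are symmetric under any permutation of the integration variables, so the integrand is invariant under the action of the symmetric group $S_k$ on $[0,t]^k$.

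Exploiting this symmetry, I would decompose the cube (up to a set of Lebesgue measure zero) into the $k!$ regions obtained from the possible orderings of $\tilde t_1,\dots,\tilde t_k$; each region contributes equally after relabelling, so the cube integral equals $k!=\Gamma(k+1)$ times the integral over the single canonical simplex $0<\tilde t_1<\tilde t_2<\cdots<\tilde t_k<t$. (The prefactor written as $\Gamma(n)$ in the statement should be read as this $\Gamma(k+1)=k!$.) On that simplex one has $\min(\tilde t_i,\tilde t_j)=\tilde t_{\min(i,j)}$, and a short counting argument shows that the index $m$ occurs as the minimum in exactly $2(k-m)+1$ of the $k^2$ ordered pairs, whence $\sum_{i,j}\min(\tilde t_i,\tilde t_j)=\sum_{m=1}^k[2(k-m)+1]\,\tilde t_m$.

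Substituting this into the kernel, the two $\sigma^2/2$ contributions cancel and the exponent collapses to $\sum_{m=1}^k[\mu+(k-m)\sigma^2]\tilde t_m$, so that the integrand factorizes as $\prod_{m=1}^k e^{[\mu+(k-m)\sigma^2]\tilde t_m}$. Reading off the coefficients — namely $\mu+(k-1)\sigma^2$ for the innermost variable $\tilde t_1$ down to $\mu$ for the outermost $\tilde t_k$ — and imposing the simplex ordering as the nested limits $\int_0^t d\tilde t_k\int_0^{\tilde t_k}d\tilde t_{k-1}\cdots\int_0^{\tilde t_2}d\tilde t_1$ reproduces the claimed iterated integral.

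I expect the only genuine obstacle to be the bookkeeping in the combinatorial step: one must verify carefully that the min-sum telescopes to the linear form $\sum_m[2(k-m)+1]\tilde t_m$ and that the residual $\sigma^2/2$ terms exactly cancel the $-\sigma^2/2$ coming from the drift, so that the coefficients emerge in the precise pattern $\mu+(k-m)\sigma^2$. The symmetrization itself is routine, but it is worth stating explicitly that the diagonal $\{\tilde t_i=\tilde t_j\}$ carries zero measure, so that the decomposition of the cube into $k!$ congruent simplices is exact.
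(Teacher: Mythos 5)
Your proof is correct and follows essentially the same route as the paper's own: insert the Gaussian/Ito expectation, use permutation invariance of the resulting kernel to reduce the cube $[0,t]^k$ to the ordered simplex, and evaluate the min-sum there so that the diagonal terms cancel the $-\frac{\sigma^2}{2}$ drift contribution; your count that the index $m$ realizes the minimum in $2(k-m)+1$ ordered pairs is equivalent to the paper's splitting of $\sum_{i,j}\min(\tilde t_i,\tilde t_j)$ into the diagonal plus $2\sum_{i<j}\min(\tilde t_i,\tilde t_j)=2\sum_{i}(k-i)\tilde t_i$.

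The one point where you depart from the paper is the symmetrization prefactor, and there you are right and the paper is not: the cube decomposes into $k!$ congruent simplices, so the prefactor is $k!=\Gamma(k+1)$, whereas the paper's proof (and its subsequent formulas $r_k=t^k\Gamma(k)s_k$) carries $\Gamma(k)=(k-1)!$, off by a factor of $k$. You can confirm your factor against the paper's own consistency check at $k=3$, $\sigma=0$: the determinant there gives $s_3=\frac{(e^{\mu t}-1)^3}{6\mu^3t^3}$, and only $r_3=3!\,t^3 s_3$ reproduces the deterministic value $\bigl(\int_0^t e^{\mu s}\,ds\bigr)^3=\frac{(e^{\mu t}-1)^3}{\mu^3}$; taking $\Gamma(3)=2$ instead leaves one short by a factor of $3$. (Equally, at $k=2$ a direct computation gives $r_2=2\int_0^t d\tilde t_2\, e^{\mu\tilde t_2}\int_0^{\tilde t_2}d\tilde t_1\,e^{(\mu+\sigma^2)\tilde t_1}$, with prefactor $2!$ rather than $\Gamma(2)=1$.) So your reading of the undefined symbol $\Gamma(n)$ in the statement as $\Gamma(k+1)=k!$ is the correct interpretation, and your argument as written establishes the corrected statement.
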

\begin{proof}
By a direct application of Eq. (\ref{eq:ito})
\begin{eqnarray}
r_{k}(\mu,\sigma,t)&=&\left\langle \underbrace{\int_0^t \cdots \int_0^t}_{k} e^{\sum_{i=1}^k[(\mu-\frac{\sigma^2}{2})  {\tilde t}_i+ \sigma W_{{\tilde t}_i}]} d{\tilde t}_1 \cdots d{\tilde t}_k \right\rangle \nonumber \\
&=&\int_0^t \cdots \int_0^t e^{\sum_{i=1}^k[(\mu-\frac{\sigma^2}{2})  {\tilde t}_i+ \frac{\sigma^2}{2} \sum_j \text{min}({\tilde t}_i,{\tilde t}_j)} dt_1 \cdots dt_k.
\end{eqnarray}
Due to symmetry of integrand, we can order the integration variables as\ ${\tilde t}_i<{\tilde t}_{i+1}$, obtaining
\begin{eqnarray}
r_{k}(\mu,\sigma,t)&=&  \Gamma(k) \int_0^t \int_0^{{\tilde t}_{k-1}} \cdots \int_0^{{\tilde t}_{2}} e^{\sum_{i=1}^k \mu  {\tilde t}_i+ \frac{\sigma^2}{2} \sum_{i\neq j; i,j=1}^k \text{min}({\tilde t}_i,{\tilde t}_j)]} d{\tilde t}_1 \cdots d{\tilde t}_k ,
\end{eqnarray}
whence:
\begin{eqnarray}
r_{k}(\mu,\sigma,t)&=&\Gamma(k) \int_0^t \int_0^{{\tilde t}_{k-1}} \cdots \int_0^{{\tilde t}_{2}} e^{\sum_{i=1}^k \mu  {\tilde t}_i+ \sigma^2 \sum_{i<j}^n \text{min}({\tilde t}_i,{\tilde t}_j)]} d{\tilde t}_1 \cdots d{\tilde t}_k  \nonumber \\
&=&\Gamma(k) \int_0^t \int_0^{{\tilde t}_{k-1}} \cdots \int_0^{{\tilde t}_{2}} e^{\sum_{i=1}^k \mu  {\tilde t}_i+ \sigma^2 \sum_{i=1}^k (k-i){\tilde t}_i]} d{\tilde t}_1 \cdots d{\tilde t}_k.
\end{eqnarray}
After rearranging carefully the terms, we arrive at the final result:
\begin{eqnarray}
r_{k}(\mu,\sigma,t) &=&\Gamma(k) \int_0^t e^{\mu {\tilde t}_k}\int_0^{{\tilde t}_{k-1}} e^{(\mu+ \sigma^2) {\tilde t}_{k-2}}\cdots \int_0^{{\tilde t}_{3}} e^{ (\mu+(k-2) \sigma^2) {\tilde t}_2} \int_0^{{\tilde t}_{2}} e^{ (\mu+(k-1) \sigma^2) {\tilde t}_1} d{\tilde t}_1 \cdots d{\tilde t}_k  \nonumber \\
\label{eq:formula}
\end{eqnarray}
\end{proof}
Let us now expand further on Eq. (\ref{eq:formula}). It is convenient to first perform the rescaling
$t_i=t u_i$. Then, by defining $\lambda_j=t (\mu+(k-j)\sigma^2)$, we obtain
\begin{eqnarray}
r_k(\vec \lambda,t)&=& t^k \Gamma(k) \int_0^1 du_k \int_0^{u_k}d u_{k-1}\cdots \int_0^{u_2}du_1 e^{ \sum_{i=1}^k \lambda_i u_i} \equiv
t^k \Gamma(k) s_k({ \lambda}_1,\cdots,{ \lambda}_k)   
\end{eqnarray}
Therefore, the computation of \eqref{eq:cumulant} reduces to the computation of
\begin{equation}
s_k({ \lambda}_1,\cdots,{ \lambda}_k)=\int_0^1 du_k \int_0^{u_k}d u_{k-1}\cdots \int_0^{u_2}du_1 e^{ \sum_{i=1}^k  \lambda_i u_i}.
\end{equation}
a very similar formula had been obtained in \cite{Yor1}; the main difference between the treatment made there and our relies on what follows. An important observation will enable us to evaluate these integrals exactly by means of combinatorics: it is possible in fact
to prove the following result, which establishes a recursion relation among the $s_k$.
\begin{lemma}
For the quantity $s_k({ \lambda}_1,\cdots,{ \lambda}_k)$, we have
\begin{equation}
s_k({ \lambda}_1,\cdots,{ \lambda}_k)=\frac{e^{{ \lambda}_k}s_{k-1}({ \lambda}_1,\cdots,{ \lambda}_{k-1})-s_{k-1}({ \lambda}_1,\cdots{ \lambda}_{k-2},{ \lambda}_{k-1}+{ \lambda}_{k})}{{ \lambda}_k},
\label{eq:recursion}
\end{equation}
with $s_0=1$.
\end{lemma}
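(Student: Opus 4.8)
The plan is to establish the recursion by peeling off the \emph{outermost} integration, the one over $u_k$, and applying integration by parts. First I would pull the $u_k$-integral to the front and write
\begin{equation}
s_k(\lambda_1,\dots,\lambda_k)=\int_0^1 e^{\lambda_k u_k}\,h(u_k)\,du_k,\qquad h(u)\equiv\int_0^{u}du_{k-1}\cdots\int_0^{u_2}du_1\,e^{\sum_{i=1}^{k-1}\lambda_i u_i},
\end{equation}
treating $h(u_k)$ as the nested simplex integral over the remaining $k-1$ variables whose top limit is $u_k$ instead of $1$. The two facts that drive everything are $h(1)=s_{k-1}(\lambda_1,\dots,\lambda_{k-1})$ and $h(0)=0$.

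The key step is integration by parts with $dV=e^{\lambda_k u_k}\,du_k$, so $V=e^{\lambda_k u_k}/\lambda_k$, and $U=h(u_k)$. The boundary contribution is
\begin{equation}
\Big[\tfrac{1}{\lambda_k}e^{\lambda_k u_k}h(u_k)\Big]_0^1=\frac{e^{\lambda_k}}{\lambda_k}\,s_{k-1}(\lambda_1,\dots,\lambda_{k-1}),
\end{equation}
which is already the first term of the claimed formula. For the leftover integral I would compute $h'(u_k)$ by the fundamental theorem of calculus applied to the outermost integral inside $h$ (the one over $u_{k-1}$), giving $h'(u)=e^{\lambda_{k-1}u}\,p(u)$ with $p(u)=\int_0^{u}du_{k-2}\cdots\int_0^{u_2}du_1\,e^{\sum_{i=1}^{k-2}\lambda_i u_i}$.

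The crucial recognition is that after substituting $h'$ the remaining term becomes $-\tfrac{1}{\lambda_k}\int_0^1 e^{(\lambda_{k-1}+\lambda_k)u_k}p(u_k)\,du_k$: the factors $e^{\lambda_k u_k}$ and $e^{\lambda_{k-1}u_k}$ collapse onto a single exponent $(\lambda_{k-1}+\lambda_k)u_k$. Relabelling $u_k$ as the new outermost variable of a $(k-1)$-fold simplex integral, this is exactly $s_{k-1}(\lambda_1,\dots,\lambda_{k-2},\lambda_{k-1}+\lambda_k)$, and combining the two pieces yields Eq.~(\ref{eq:recursion}). The base case $s_0=1$ is the empty-product convention, and a quick check at $k=1$ reproduces $s_1=(e^{\lambda_1}-1)/\lambda_1$.

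I expect the main obstacle to be bookkeeping rather than anything conceptual: one has to verify that differentiating $h$ strips off precisely the $u_{k-1}$-integral, so that its integrand is evaluated at the upper endpoint $u_k$ and thereby forces $\lambda_{k-1}$ and $\lambda_k$ onto the \emph{same} variable. Keeping the index ranges and the nesting of upper limits straight—so that the residual integral is genuinely of $s_{k-1}$ type with the two top arguments added—is where care is required. As a consistency aside, integrating out the \emph{innermost} variable $u_1$ instead also closes, but it merges $\lambda_1,\lambda_2$ and divides by $\lambda_1$, so it is integration by parts on $u_k$ that matches the stated identity.
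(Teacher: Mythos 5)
Your proposal is correct and follows essentially the same route as the paper's own proof: integration by parts on the outermost variable $u_k$, the fundamental theorem of calculus to differentiate the nested simplex integral, the merging of $e^{\lambda_k u_k}$ and $e^{\lambda_{k-1}u_k}$ into a single exponent, and the boundary condition $f(0)=0$. If anything, your write-up is slightly more explicit than the paper's (whose expression for $f'(u_k)$ contains a typographical slip in the upper integration limit) about the relabelling that identifies the residual integral as $s_{k-1}(\lambda_1,\dots,\lambda_{k-2},\lambda_{k-1}+\lambda_k)$.
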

\begin{proof}
Let us write
\begin{eqnarray}
s_k({ \lambda}_1,\cdots,{ \lambda}_k)&=&\int_0^1 du_k \int_0^{u_k}d u_{k-1}\cdots \int_0^{u_2}du_1 e^{ \sum_{i=1}^k  \lambda_i u_i}\nonumber \\
&=&\int_0^1 du_k e^{ \lambda_k u_k} f(u_k),
\end{eqnarray}
with $f(u_k)=\int_0^{u_k}d u_{k-1}\cdots \int_0^{u_2}du_1 e^{ \sum_{i=1}^{k-1}  \lambda_i u_i}$.
Integrating by parts,
\begin{equation}
s_k({ \lambda}_1,\cdots,{ \lambda}_k)=\left.\frac{e^{{ \lambda}_k u_k}}{{ \lambda}_k} f(u_k)\right|^1_0-\int_0^1 d u_k \frac{e^{{ \lambda}_k u_k}}{{ \lambda}_k} f'(u_k),
\end{equation}
where
\begin{eqnarray}
f'(u_k)&=&\int_0^{u_{k-1}} \cdots \int_0^{u_2}du_1 e^{ \sum_{i=1}^{k-2}  \lambda_i u_i+ \lambda_{k-1} u_k} \nonumber \\
&=& e^{\lambda_{k-1} u_k}\int_0^{u_{k-1}} \cdots \int_0^{u_2}du_1 e^{ \sum_{i=1}^{k-2}  \lambda_i u_i}
\end{eqnarray}
Finally, the identity $f(0)=0$ gives the desired result.
\end{proof}
Eq. (\ref{eq:recursion}) suggests the evaluation of the averages by means of combinatorial considerations. 
Indeed, the evaluation of the integral can proceed graphically, for any fixed order of the moment $k$, as in Fig. \ref{fig:rec}. Starting from the top and using the properties of the recurrence relation, at each order $k$, we organize the recurrence on a binary tree. In Fig. \ref{fig:rec}, each left branch will pull out a factor $e^{\lambda}/{\lambda}$, where $\lambda$ is the $\lambda$ obtained from the previous order. One has to consider the fact that however, in each right branch one pulls out a factor $-1/\lambda$, and sums the two factors of $\lambda$'s. To obtain the final formula, once the empty set has been reached, one multiplies the final term by all the factors in the branch.
\begin{figure}
\centering
\includegraphics[scale=0.4]{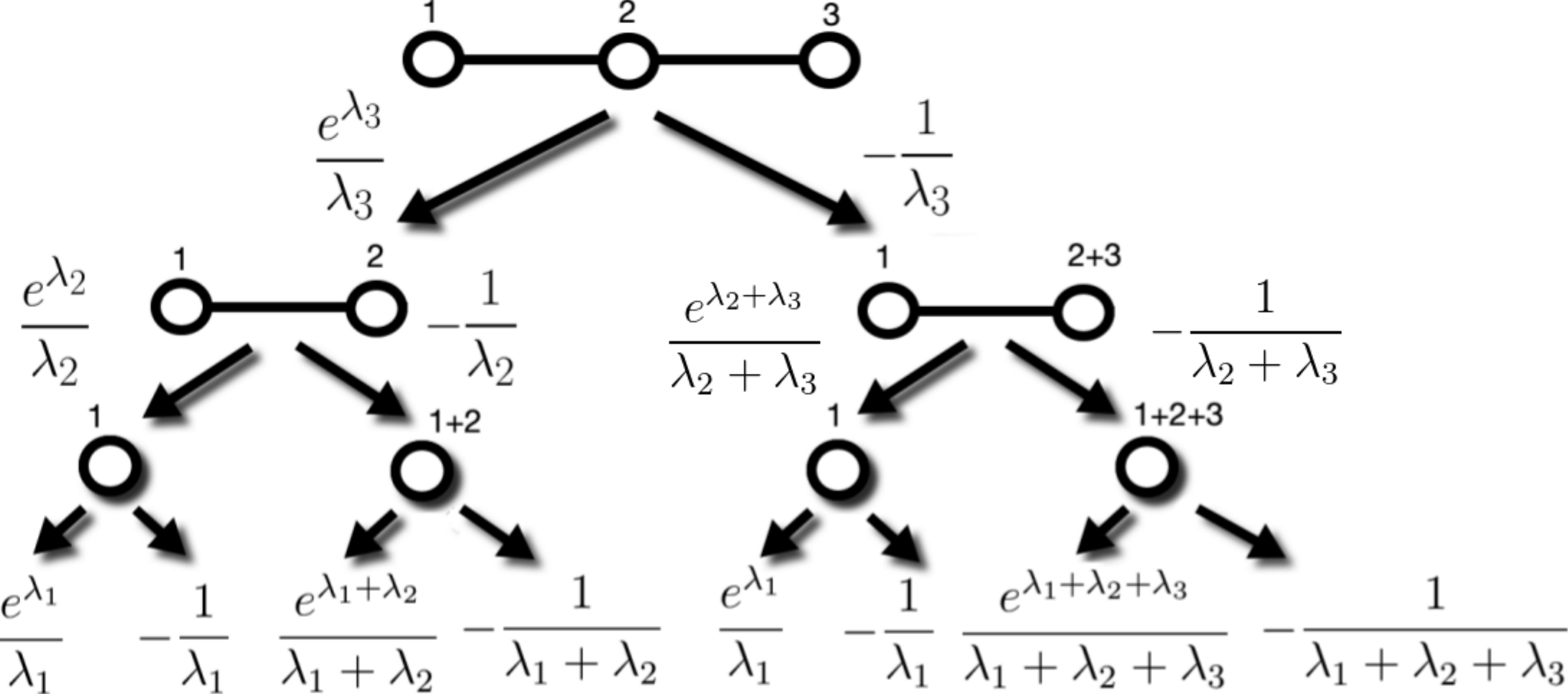}
\caption{Application of the graphical method for evaluating $k$-th moments of the exponential integrated Gaussian process to the case $k=3$.}
\label{fig:rec}
\end{figure}

We now give exact formulas for all the terms obtained from the recurrence. To lighten the 
notation,
we define $\mu_{k,j}=\ll_j+\cdots+\ll_k$ and $s_k=s_k(\ll_1,\ldots,\ll_k)$.

By iterating the second term of the recurrence, we obtain
\begin{align*}
s_k&=\frac{e^{\mu_{k,k}}}{\mu_{k,k}}s_{k-1}-\frac{e^{\mu_{k,k-1}}}{\mu_{k,k}\mu_{k,k-1}}s_{k-2}+\frac{1}{\mu_{k,k}\mu_{k,k-1}}
s_{k-2}(\ll_1,\ldots,\ll_{k-3},\mu_{k,k-2})\\
&=\frac{e^{\mu_{k,k}}}{\mu_{k,k}}s_{k-1}-\frac{e^{\mu_{k,k-1}}}{\mu_{k,k}\mu_{k,k-1}}s_{k-2}+\frac{e^{\mu_{k,k-2}}}{\mu_{k,k}\mu_{k,k-1}\mu_{k,k-2}}s_{k-3}
-\frac{1}{\mu_{k,k}\mu_{k,k-1}\mu_{k,k-2}}s_{k-3}(\ll_1,\ldots,\ll_{k-4},\mu_{k,k-3})\\
&=\cdots\\
&=\sum_{j=1}^{k}(-1)^{k-j}\frac{e^{\mu_{k,j}}}{\mu_{k,j}\mu_{k,j+1}\cdots\mu_{k,k}}s_{j-1}+(-1)^k\frac{1}{\mu_{k,1}\cdots\mu_{k,k}}.
\end{align*}
Therefore,
\begin{align}
s_k(\ll_1,\ldots,\ll_k)&=\sum_{j=1}^{k}(-1)^{k-j}\frac{e^{\ll_j+\cdots+\ll_k}}{\prod_{i=j}^k(\ll_i+\cdots+\ll_k)}s_{j-1}(\ll_1,\ldots,\ll_{j-1})+\frac{(-1)^k}{\prod_{i=1}^k(\ll_i+\cdots+\ll_k)}.\label{eqss1}
\end{align}

\begin{lemma}\label{lem2}
Let $f_k=\sum_{j=0}^ke_j^kf_{j-1}$ with $f_{-1}=1$. Then
$$f_k=\sum_{m=0}^k\sum_{0=i_0<i_1<\cdots<i_m<k+1=i_{m+1}}e_{i_m}^{i_{m+1}-1}e_{i_{m-1}}^{i_m-1}\cdots e_{i_0}^{i_1-1}.$$
\end{lemma}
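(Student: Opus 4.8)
The plan is to prove the identity by induction on $k$, exploiting the observation that each chain $0=i_0<i_1<\cdots<i_m<k+1=i_{m+1}$ is nothing more than a choice of subset $\{i_1,\ldots,i_m\}\subseteq\{1,\ldots,k\}$, and that peeling off its largest element exposes exactly the structure of the recurrence. I would first abbreviate the right-hand side as $T_k=\sum_{m=0}^k\sum_{0=i_0<\cdots<i_m<k+1}e_{i_m}^{i_{m+1}-1}\cdots e_{i_0}^{i_1-1}$ (with $i_{m+1}=k+1$), and prove $T_k=f_k$ by showing that $T_k$ obeys the same recurrence and initial condition that uniquely determine $f_k$. The base case $k=0$ reduces to the single chain $m=0$, $i_0=0$, $i_1=1$, of weight $e_0^{0}=e_0^0 f_{-1}$, matching $f_0$.

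For the inductive step I would split the outer sum in $T_k$ according to the value of the largest intermediate index. The $m=0$ term is the single chain $0<k+1$ with weight $e_{i_0}^{i_1-1}=e_0^{k}$, which is precisely the $j=0$ summand $e_0^k f_{-1}$ of the recurrence. For $m\ge 1$, set $j=i_m\in\{1,\ldots,k\}$. The leading (leftmost) factor collapses to $e_{i_m}^{i_{m+1}-1}=e_j^{(k+1)-1}=e_j^{k}$, and the remaining factors $e_{i_{m-1}}^{i_m-1}\cdots e_{i_0}^{i_1-1}$ run over exactly the chains $0=i_0<\cdots<i_{m-1}<j=i_m$, i.e.\ the chains whose terminal index is $j=(j-1)+1$. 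These are precisely the chains summed in $T_{j-1}$. Hence, grouping the $m\ge1$ terms by $j$ and factoring out $e_j^k$ yields $\sum_{j=1}^k e_j^k T_{j-1}$.

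Combining the two contributions gives $T_k=e_0^k+\sum_{j=1}^k e_j^k T_{j-1}$. Invoking the inductive hypothesis $T_{j-1}=f_{j-1}$ for $1\le j\le k$, together with $e_0^k=e_0^k f_{-1}$ (since $f_{-1}=1$), this becomes $T_k=\sum_{j=0}^k e_j^k f_{j-1}=f_k$, closing the induction.

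I expect the only delicate point to be the index bookkeeping in the inductive step: one must check that the top exponent $i_{m+1}-1$ collapses to $k$ so that the factored term is genuinely $e_j^k$, and that after removing this factor the terminal index of the truncated chain becomes $j$, so the residual sum is $T_{j-1}$ rather than some shifted object. Once this correspondence is pinned down the rest is routine reindexing. It is worth emphasizing that no analytic input is required—nothing about the specific $\lambda$-products producing the $e_j^k$ in \eqref{eqss1} is used—since the statement is a purely formal identity valid for any doubly-indexed family satisfying $f_k=\sum_{j=0}^k e_j^k f_{j-1}$ with $f_{-1}=1$.
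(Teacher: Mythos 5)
Your proof is correct and takes essentially the same approach as the paper's: both proceed by (strong) induction on $k$, and both rest on the identical combinatorial decomposition of a chain $0=i_0<i_1<\cdots<i_m<k+1$ by its largest intermediate index $j=i_m$, which factors its weight as $e_j^k$ times the weight of a chain counted by the sum for index $j-1$. The only difference is the direction of presentation --- you verify that the chain sum satisfies the defining recurrence and appeal to uniqueness, whereas the paper substitutes the induction hypothesis into the recurrence and re-indexes to rebuild the longer chains --- but this is the same bijection read in the opposite direction.
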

\begin{proof}
We proceed the proof by induction on $k$. We have $f_k=e_0^0$, which holds for $k=0$. We assume that the claim holds for $k$ and let us prove it for $k+1$. By the recurrence relation, we have
\begin{align*}
f_{k+1}=e_0^{k+1}+\sum_{j=1}^{k+1}e_j^{k+1}f_{j-1}.
\end{align*}
By induction hypothesis, we have
\begin{align*}
f_{k+1}&=e_0^{k+1}+\sum_{j=1}^{k+1}e_j^{k+1}\left(\sum_{m=0}^{j-1}\sum_{0=i_0<i_1<\cdots<i_m<j=i_{m+1}}e_{i_m}^{j-1}e_{i_{m-1}}^{i_m-1}\cdots e_{i_0}^{i_1-1}\right)\\
&=e_0^{k+1}+\sum_{j=1}^{k+1}\left(\sum_{m=0}^{j-1}\sum_{0=i_0<i_1<\cdots<i_m<j=i_{m+1}<i_{m+2}=k+2}e_{i_{m+1}}^{i_{m+2}-1}e_{i_m}^{j-1}e_{i_{m-1}}^{i_m-1}\cdots e_{i_0}^{i_1-1}\right)\\
&=e_0^{k+1}+\sum_{m=1}^{k+1}\left(\sum_{j=m}^{k+1}\sum_{0=i_0<i_1<\cdots<i_{m-1}<j=i_{m}<i_{m+1}=k+2}e_{i_{m}}^{i_{m+1}-1}e_{i_{m-1}}^{i_m-1}\cdots e_{i_0}^{i_1-1}\right)\\
&=e_0^{k+1}+\sum_{m=1}^{k+1}\sum_{0=i_0<i_1<\cdots<i_{m}<i_{m+1}=k+2}e_{i_{m}}^{i_{m+1}-1}e_{i_{m-1}}^{i_m-1}\cdots e_{i_0}^{i_1-1}\\
&=\sum_{m=0}^{k+1}\sum_{0=i_0<i_1<\cdots<i_{m}<i_{m+1}=k+2}e_{i_{m}}^{i_{m+1}-1}e_{i_{m-1}}^{i_m-1}\cdots e_{i_0}^{i_1-1},
\end{align*}
which completes the induction step.
\end{proof}

If $e_j^k=(-1)^{k-j}\frac{e^{\ll_j+\cdots+\ll_k}}{\prod_{i=j}^k(\ll_i+\cdots+\ll_k)}$ and $e_0^k=\frac{(-1)^k}{\prod_{i=1}^k(\ll_i+\cdots+\ll_k)}$, then Eq. \eqref{eqss1} can be written as
$$s_k(\ll_1,\ldots,\ll_k)=\sum_{j=1}^{k}e_j^ks_{j-1}(\ll_1,\ldots,\ll_{j-1})+e_0^k.$$
Applying Lemma \ref{lem2}, we obtain
\begin{align*}
&s_k(\ll_1,\ldots,\ll_k)=\\
&=\sum_{m=0}^k\sum_{0=i_0<i_1<\cdots<i_m<k+1=i_{m+1}}e_{i_m}^{i_{m+1}-1}e_{i_{m-1}}^{i_m-1}\cdots e_{i_0}^{i_1-1}\\
&=\sum_{m=0}^k\sum_{0=i_0<i_1<\cdots<i_m<k+1=i_{m+1}}\frac{(-1)^{i_1-1}}{\prod_{j=1}^{i_1-1}(\ll_j+\cdots+\ll_{i_1-1})}
\prod_{\ell=1}^m\frac{(-1)^{i_{\ell+1}-i_\ell-1}e^{\ll_{i_\ell}+\cdots+\ll_{i_{\ell+1}-1}}}{\prod_{j=i_\ell}^{i_{\ell+1}-1}(\ll_j+\cdots+\ll_{i_{\ell+1}-1})}\\
&=\sum_{m=0}^k\sum_{0=i_0<i_1<\cdots<i_m<k+1=i_{m+1}}\frac{(-1)^{k-m}e^{\ll_{i_1}+\cdots+\ll_k}}
{\prod_{j=1}^{i_1-1}(\ll_j+\cdots+\ll_{i_1-1})\prod_{\ell=1}^m\prod_{j=i_\ell}^{i_{\ell+1}-1}(\ll_j+\cdots+\ll_{i_{\ell+1}-1})},
\end{align*}
which leads to the following result.
\begin{theorem}
For all $k\geq0$, $s_k(\ll_1,\ldots,\ll_k)$ is given by
\begin{align*}
\sum_{m=0}^k\sum_{0=i_0<i_1<\cdots<i_m<k+1=i_{m+1}}\frac{(-1)^{k-m}e^{\ll_{i_1}+\cdots+\ll_k}}
{\prod_{j=1}^{i_1-1}(\ll_j+\cdots+\ll_{i_1-1})\prod_{\ell=1}^m\prod_{j=i_\ell}^{i_{\ell+1}-1}(\ll_j+\cdots+\ll_{i_{\ell+1}-1})}.
\end{align*}
\end{theorem}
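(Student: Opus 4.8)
The plan is to recognize that the recursion for $s_k$ recorded just above the statement is exactly of the form treated in Lemma \ref{lem2}, and then to carry out the substitution of the explicit coefficients, tracking signs and exponential factors through a short telescoping argument.

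First I would observe that, with the abbreviations $e_j^k=(-1)^{k-j}\frac{e^{\ll_j+\cdots+\ll_k}}{\prod_{i=j}^k(\ll_i+\cdots+\ll_k)}$ for $1\le j\le k$ and $e_0^k=\frac{(-1)^k}{\prod_{i=1}^k(\ll_i+\cdots+\ll_k)}$, Eq. \eqref{eqss1} reads $s_k=\sum_{j=0}^k e_j^k s_{j-1}$ with the convention $s_{-1}=1$ (consistent with $s_0=1$ and $e_0^0=1$). This is precisely the recurrence $f_k=\sum_{j=0}^k e_j^k f_{j-1}$, $f_{-1}=1$, of Lemma \ref{lem2}, so $s_k=f_k$ and Lemma \ref{lem2} applies verbatim, expressing $s_k$ as a sum over chains $0=i_0<i_1<\cdots<i_m<i_{m+1}=k+1$ of the products $e_{i_m}^{i_{m+1}-1}\cdots e_{i_0}^{i_1-1}$.

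Next I would substitute the explicit coefficients into a single term of that sum. The boundary factor $e_{i_0}^{i_1-1}=e_0^{i_1-1}$ contributes the denominator block $\prod_{j=1}^{i_1-1}(\ll_j+\cdots+\ll_{i_1-1})$ and carries no exponential, whereas each factor $e_{i_\ell}^{i_{\ell+1}-1}$ with $\ell\ge1$ contributes the block $\prod_{j=i_\ell}^{i_{\ell+1}-1}(\ll_j+\cdots+\ll_{i_{\ell+1}-1})$ together with the exponential $e^{\ll_{i_\ell}+\cdots+\ll_{i_{\ell+1}-1}}$. Since the intervals $[i_\ell,i_{\ell+1}-1]$ for $\ell=1,\dots,m$ partition $\{i_1,\dots,k\}$, the product of exponentials telescopes to the single factor $e^{\ll_{i_1}+\cdots+\ll_k}$, and assembling the denominator blocks reproduces exactly the product displayed in the statement.

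The only delicate bookkeeping, and hence the step I expect to demand the most care, is the sign. The total sign exponent is $(i_1-1)+\sum_{\ell=1}^m(i_{\ell+1}-1-i_\ell)$; since $\sum_{\ell=1}^m(i_{\ell+1}-i_\ell)$ telescopes to $i_{m+1}-i_1=(k+1)-i_1$ while the remaining constants sum to $-m$, this exponent collapses to $k-m$, independent of the interior indices $i_1,\dots,i_{m-1}$, yielding the overall factor $(-1)^{k-m}$. Inserting these three pieces, namely the sign $(-1)^{k-m}$, the exponential $e^{\ll_{i_1}+\cdots+\ll_k}$, and the combined denominator product, into the chain sum of Lemma \ref{lem2} gives the claimed closed form for every $k\ge0$, the base case $k=0$ being checked directly since the unique $m=0$ chain contributes $1=s_0$.
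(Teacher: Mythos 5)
Your proposal is correct and follows essentially the same route as the paper: recognize that Eq.~\eqref{eqss1} with the coefficients $e_j^k$ is exactly the recurrence of Lemma~\ref{lem2}, apply that lemma, and then substitute the explicit $e_j^k$ into each chain product, telescoping the exponentials to $e^{\ll_{i_1}+\cdots+\ll_k}$ and the sign exponent $(i_1-1)+\sum_{\ell=1}^m(i_{\ell+1}-1-i_\ell)$ to $k-m$. Your sign bookkeeping is in fact spelled out slightly more explicitly than in the paper, but the argument is identical in substance.
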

\medskip

The above theorem gives
\begin{align*}
&s_k(t\ll_1,\ldots,t\ll_k)=\\
&\sum_{m=0}^k\sum_{0=i_0<i_1<\cdots<i_m<k+1=i_{m+1}}\frac{(-1)^{k-m}e^{t(\ll_{i_1}+\cdots+\ll_k)}}
{t^{k-1}\prod_{j=1}^{i_1-1}(\ll_j+\cdots+\ll_{i_1-1})\prod_{\ell=1}^m\prod_{j=i_\ell}^{i_{\ell+1}-1}(\ll_j+\cdots+\ll_{i_{\ell+1}-1})}.
\end{align*}
Hence, the coefficient of $t^n$ in $s_k(t\ll_1,\ldots,t\ll_k)$ is given by
\begin{align*}
&\frac{1}{(n+k-1)!}\sum_{m=0}^k\sum_{0=i_0<i_1<\cdots<i_m<k+1=i_{m+1}}\frac{(-1)^{k-m}(\ll_{i_1}+\cdots+\ll_k)^{n+k-1}}
{\prod_{j=1}^{i_1-1}(\ll_j+\cdots+\ll_{i_1-1})\prod_{\ell=1}^m\prod_{j=i_\ell}^{i_{\ell+1}-1}(\ll_j+\cdots+\ll_{i_{\ell+1}-1})}.
\end{align*}

A cleaner formula can be obtained in terms of determinants.  If we define $s_k=s_k(\ll_1,\ldots,\ll_k)$ for all $k\geq0$, where we define $s_0=s_{-1}=1$. If $e_j^k=(-1)^{k-j}\frac{e^{\ll_j+\cdots+\ll_k}}{\prod_{i=j}^k(\ll_i+\cdots+\ll_k)}$ and $e_0^k=\frac{(-1)^k}{\prod_{i=1}^k(\ll_i+\cdots+\ll_k)}$, then Eq. \eqref{eqss1} can be written as
$$s_k=\sum_{j=0}^{k}e_j^ks_{j-1}.$$
Using Theorem 4.20 in \cite{Vein} (see also \cite{Dets}), we obtain the following result:
\begin{theorem}
For all $k\geq0$, $s_k=s_k(\ll_1,\ldots,\ll_k)$ is given by
\begin{align}
\det\left(\begin{array}{lllllll}
e_k^k&e_{k-1}^k&e_{k-2}^k&\cdots&e_2^k&e_1^k&e_0^k\\
-1&e_{k-1}^{k-1}&e_{k-2}^{k-1}&\cdots&e_2^{k-1}&e_1^{k-1}&e_0^{k-1}\\
0&-1&e_{k-2}^{k-2}&\cdots&e_2^{k-2}&e_1^{k-2}&e_0^{k-2}\\
\vdots&\vdots&\vdots&&\vdots&\vdots&\vdots\\
0&0&0&&-1&e_1^1&e_0^1\\
0&0&0&&0&-1&e_0^0\\
\end{array}\right),
\label{eq:det}
\end{align}
where $e_j^k=(-1)^{k-j}\frac{e^{\ll_j+\cdots+\ll_k}}{\prod_{i=j}^k(\ll_i+\cdots+\ll_k)}$ and $e_0^k=\frac{(-1)^k}{\prod_{i=1}^k(\ll_i+\cdots+\ll_k)}$.
\end{theorem}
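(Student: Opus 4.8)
The plan is to prove the determinantal formula directly by induction on $k$, expanding the Hessenberg determinant along its first row, rather than treating Theorem~4.20 of \cite{Vein} as a black box. Write $M_k$ for the $(k+1)\times(k+1)$ matrix in \eqref{eq:det} and set $D_k=\det M_k$, together with the convention $D_{-1}:=1$ for the empty determinant; note that $D_0=e_0^0=1$. Since the relation $s_k=\sum_{j=0}^{k}e_j^k s_{j-1}$ with $s_{-1}=1$ stated just above determines the sequence $(s_k)$ uniquely, it suffices to show that $D_k$ obeys the very same recurrence, i.e. that $D_k=\sum_{j=0}^{k}e_j^k D_{j-1}$.

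First I would record the matrix entries explicitly: indexing rows and columns by $0,\dots,k$, the $(i,c)$ entry equals $e_{k-c}^{\,k-i}$ when $c\ge i$, equals $-1$ when $c=i-1$, and vanishes when $c<i-1$. In particular the entire first row is $(e_k^k,e_{k-1}^k,\dots,e_0^k)$, so a Laplace expansion along row $0$ yields $D_k=\sum_{c=0}^{k}(-1)^c e_{k-c}^{\,k}\,\mu_c$, where $\mu_c$ is the minor obtained by deleting row $0$ and column $c$. Everything then hinges on evaluating $\mu_c$.

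The key step is a block-triangular factorization of each minor that exploits the strictly sub-diagonal zero pattern of a Hessenberg matrix. In $\mu_c$ the surviving rows are $1,\dots,k$ and the surviving columns are $0,\dots,c-1,c+1,\dots,k$; I would verify that every entry with row index $i\ge c+1$ and column index $\le c-1$ vanishes, so that $\mu_c=\det\!\begin{pmatrix}B&C\\0&E\end{pmatrix}$ with $B$ the $c\times c$ top-left block and $E$ the $(k-c)\times(k-c)$ bottom-right block. The block $B$ is upper triangular with every diagonal entry equal to $-1$, whence $\det B=(-1)^c$; and the re-indexing $(\rho,\gamma)=(i-c-1,\,c'-c-1)$ identifies $E$ with the defining matrix $M_{k-c-1}$. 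Hence $\mu_c=(-1)^c D_{k-c-1}$, the factor $(-1)^c$ cancels the cofactor sign, and substituting $j=k-c$ turns the expansion into $D_k=\sum_{j=0}^{k}e_j^k D_{j-1}$, which by the induction hypothesis equals $s_k$; the base cases $D_{-1}=s_{-1}$ and $D_0=s_0$ are immediate.

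The main obstacle I anticipate is purely bookkeeping: verifying the zero pattern that produces the block-triangular split, confirming that the bottom-right block is \emph{exactly} the smaller matrix $M_{k-c-1}$ after the shift of indices (so that the sub- and superscripts of the $e_j^k$ align), and tracking the two independent sources of signs—the cofactor sign $(-1)^c$ and the value $\det B=(-1)^c$—so that they cancel rather than reinforce. None of these steps is deep, but each must be done with care, and the degenerate extreme $c=k$, where $E$ collapses to the empty block and $D_{-1}=1$, should be checked separately to confirm that it reproduces the boundary term $e_0^k$.
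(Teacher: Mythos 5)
Your proof is correct, but it takes a different route from the paper: the paper does not prove the determinant identity at all, it simply invokes Theorem~4.20 of Vein and Dale (the standard fact, also in Janji\'c's paper, that a sequence defined by a full-history linear recurrence $s_k=\sum_{j=0}^k e_j^k s_{j-1}$ equals an upper Hessenberg determinant built from the coefficients), applied to the recurrence \eqref{eqss1} established just before the theorem. You instead prove that black-box result from scratch: you expand the Hessenberg determinant along its first row, observe that each minor splits block-triangularly into an upper-triangular block of $-1$'s (contributing $(-1)^c$, which cancels the cofactor sign) and an exact copy of the smaller matrix $M_{k-c-1}$, deduce that $D_k$ satisfies the same recurrence $D_k=\sum_{j=0}^k e_j^k D_{j-1}$ with the same initial conditions as $s_k$, and conclude $D_k=s_k$ by uniqueness. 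Your index bookkeeping checks out (the entry pattern $e_{k-c}^{\,k-i}$ for $c\ge i$, the vanishing lower-left block since $\gamma\le c-1<i-1$, and the degenerate $c=k$ term reproducing $e_0^k D_{-1}$ are all right), so the argument is complete. What each approach buys: the paper's citation is shorter and situates the result in the determinant literature; your argument makes the theorem self-contained and makes transparent that the determinant formula is nothing more than a repackaging of the recurrence \eqref{eqss1}, at the cost of the Laplace-expansion bookkeeping you carried out.
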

This expression can be elucidated by means of an example. Consider the case $n=3$ and $\sigma=0$. In this case, we have an exact formula for $\lambda_j=\mu t$:
$$n=3:\ {\it Det} \left(  \left[ \begin {array}{cccc} {\frac {{e^{\mu\,t}}}{
\mu\,t}}&-1/2\,{\frac {{e^{2\,\mu\,t}}}{{\mu}^{2}{t}^{2}}}&1/6\,{
\frac {{e^{3\,\mu\,t}}}{{\mu}^{3}{t}^{3}}}&-1/6\,{\frac {1}{{\mu}^{3}{
t}^{3}}}\\\noalign{\medskip}-1&{\frac {{e^{\mu\,t}}}{\mu\,t}}&-1/2\,{
\frac {{e^{2\,\mu\,t}}}{{\mu}^{2}{t}^{2}}}&1/2\,{\frac {1}{{\mu}^{2}{t
}^{2}}}\\\noalign{\medskip}0&-1&{\frac {{e^{\mu\,t}}}{\mu\,t}}&-{
\frac {1}{\mu\,t}}\\\noalign{\medskip}0&0&-1&1\end {array} \right]
 \right) =1/6\,{\frac { \left( {e^{\mu\,t}}-1 \right) ^{3}}{{\mu}^{3}{
t}^{3}}},$$
which provides the exact value obtained from the deterministic logistic equation.


The previous results are general enough to hold also for averages of the form
\begin{equation}
\tilde r_n(\mu,\sigma,t)=\langle e^{(\mu-\frac{\sigma^2}{2})t+\sigma W_t}F[W,t]^n\rangle=e^{(\mu-\frac{\sigma^2}{2})t} \langle e^{\sigma W_t}F[W,t]^n\rangle.
\end{equation}
It is easy to see that Eq. (\ref{eq:det}) applies. We just substitute ${\tilde \lambda}_k=\lambda_k+t \sigma^2$ and multiply  by a factor:
\begin{eqnarray}
\tilde r_n(\mu,\sigma,t)&=&e^{(\mu-\frac{\sigma^2}{2}) t} t^n e^{\frac{\sigma^2}{2} t} \Gamma(n) s_n(\tilde \lambda_1,\cdots, \tilde \lambda_n) \nonumber \\
&=&e^{\mu t} t^n  \Gamma(n) s_n(\tilde \lambda_1,\cdots, \tilde \lambda_n).
\label{eq:detgen}
\end{eqnarray}
The above expression will be used in the following section as an application to averages in the logistic stochastic differential equation.  Appendix \ref{sec:appendix} contains the analytical values of the functions $s_k(\lambda_1,\cdots,\lambda_k)$ and $s_k(\tilde \lambda_1,\cdots,\tilde \lambda_k)$ for the cases $n \leq 4$. It is immediate that, when $t\approx0$, we have
\begin{equation}
s_n(t \mu,t (\mu+\sigma^2),\cdots, t (\mu+(n-1)\sigma^2))\approx \frac{1}{t^n}.
\end{equation}
 This facts implies that Eq. (\ref{eq:detgen}) is analytic in $t=0$. Fig. \ref{fig:plotsn} is a plot of $s_n(t)$ for $\mu=1$, $\sigma=0.1$.

\begin{figure}
\includegraphics[scale=0.6]{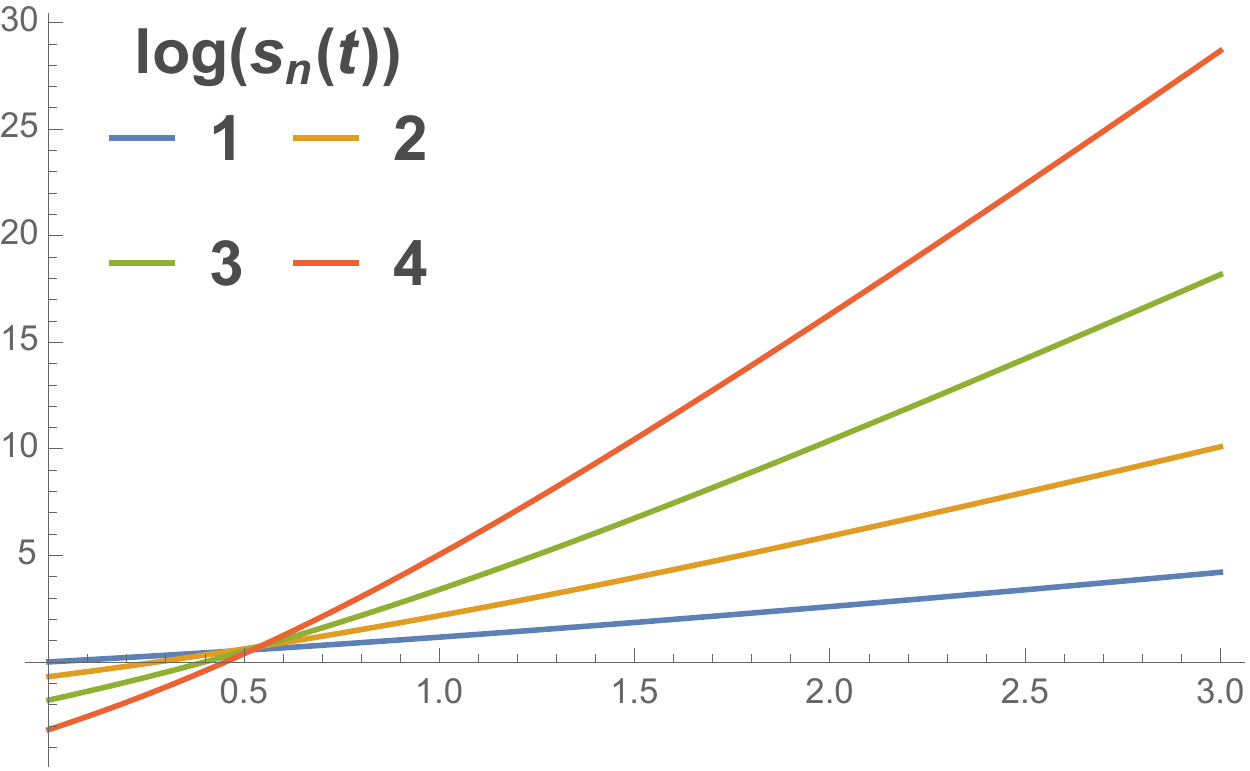}\includegraphics[scale=0.6]{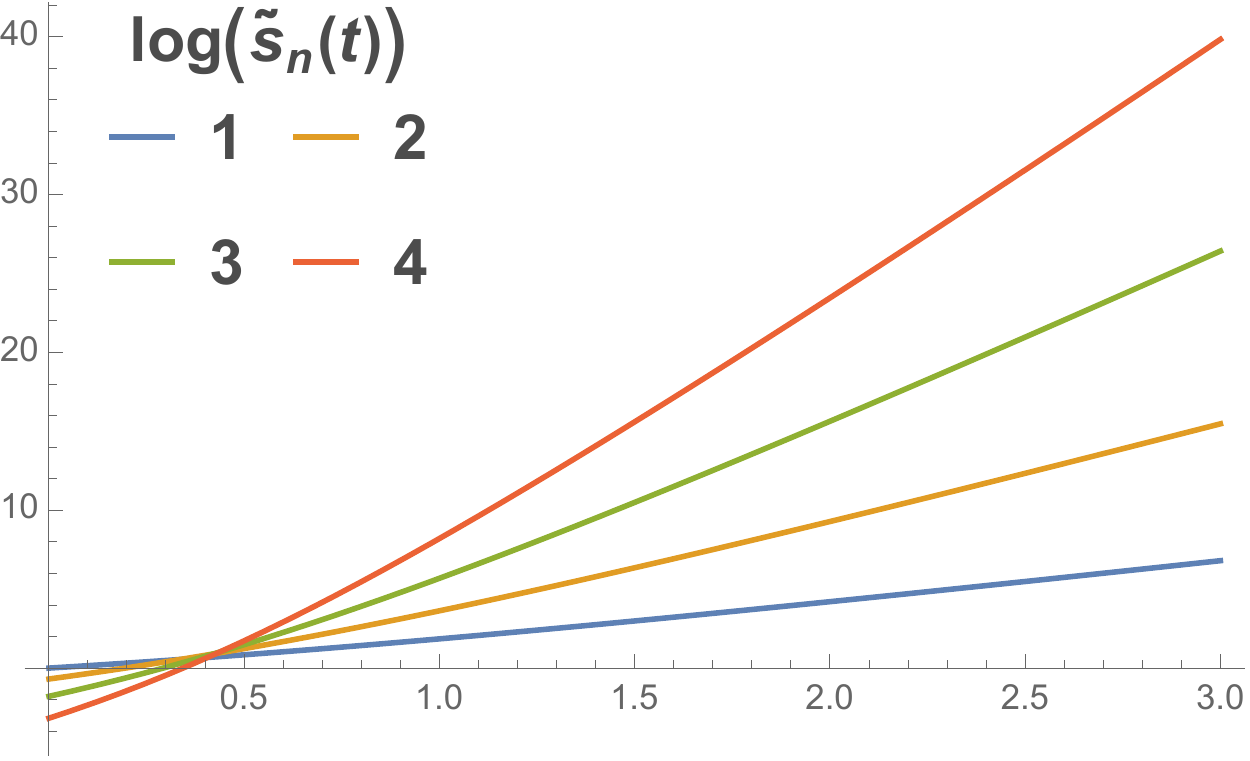}
\caption{Plot of the functions $s_n(\mu,\sigma,t)$ and $\tilde s_n(\mu,\sigma,t)$ provided in Appendix \ref{sec:appendix} as a function of $t$ for $\mu=3$, $\sigma=1$ for $n=1,2,3,4$.}
\label{fig:plotsn}
\end{figure}

\section{Averages of Logistic SDE in perturbation theory}
As an application of the formula (\ref{eq:detgen}), we focus on the solution of the logistic stochastic differential equation motion,
\begin{equation}
dx = x\left[\mu  \left(1- \frac{x}{\tilde x}\right) dt + \sigma dW\right],
\label{eq:diffeq}
\end{equation}
given by (we follow \cite{KloedenPlaten,CaravelliEtAl})
\begin{eqnarray}
x(t)&=&x_0\ e^{ (\mu-\frac{\sigma^2}{2})t+\sigma W_t }\left(1+\frac{\mu x_0}{\tilde x}\int_0^t e^{ (\mu-\frac{\sigma^2}{2})s+\sigma W_s } ds\right)^{-1} \nonumber \\
&= &x_0\ e^{ (\mu-\frac{\sigma^2}{2})t+\sigma W_t } \sum_{n=0}^\infty (-1)^n \left(\frac{\mu x_0}{\tilde x}\int_0^t e^{ (\mu-\frac{\sigma^2}{2})s+\sigma W_s } ds\right)^n.
\end{eqnarray}
We evaluate the average of the solution $x(t)$,
\begin{eqnarray}
\langle x(t)\rangle&= &\left\langle x_0\ e^{ (\mu-\frac{\sigma^2}{2})t+\sigma W_t } \sum_{n=0}^\infty (-1)^n \left(\frac{\mu x_0}{\tilde x}\int_0^t e^{ (\mu-\frac{\sigma^2}{2})s+\sigma W_s } ds\right)^n
\right\rangle \nonumber \\
&= & x_0\ \sum_{n=0}^\infty (-1)^n \left(\frac{\mu x_0}{\tilde x}\right)^n \left\langle e^{ (\mu-\frac{\sigma^2}{2})t+\sigma W_t } \left(\int_0^t e^{ (\mu-\frac{\sigma^2}{2})s+\sigma W_s } ds\right)^n
\right\rangle \nonumber \\
&= & x_0\ e^{\mu t} \sum_{n=0}^\infty (-1)^n \left(t \frac{\mu x_0}{\tilde x}\right)^n  \Gamma(n) s_n(\tilde \lambda_1,\cdots, \tilde \lambda_n).
\label{eq:pertth}
\end{eqnarray}
and observe that it involves the moments of (\ref{eq:detgen}). In the limit $\tilde x\rightarrow \infty$, Eq. (\ref{eq:pertth})
reduces to the average of the geometric Brownian motion. We consider now truncations of the mean of $x(t)$ at $k$-th order,
$\langle x(t) \rangle_k= x_0\ e^{\mu t} \sum_{n=0}^k (-1)^n (t \frac{\mu x_0}{\tilde x})^n  \Gamma(n) s_n(\tilde \lambda_1,\cdots, \tilde \lambda_n)$, and compare the truncated solution to the one obtained numerically. In Fig. \ref{fig:simulations}, we plot $\langle x(t) \rangle_k$ for $k=0,1,\ldots,4$  obtained for $\mu=1$, $\sigma=0.1$, $\tilde x=100$ by means of a stochastic Euler method with $dt=10^{-3}$, and the averages then obtained using Monte Carlo over $1000$ samples. We observe that the higher the order of the approximation, the closer we are to the solution obtained by Monte Carlo.

\begin{figure}
\includegraphics[scale=0.5]{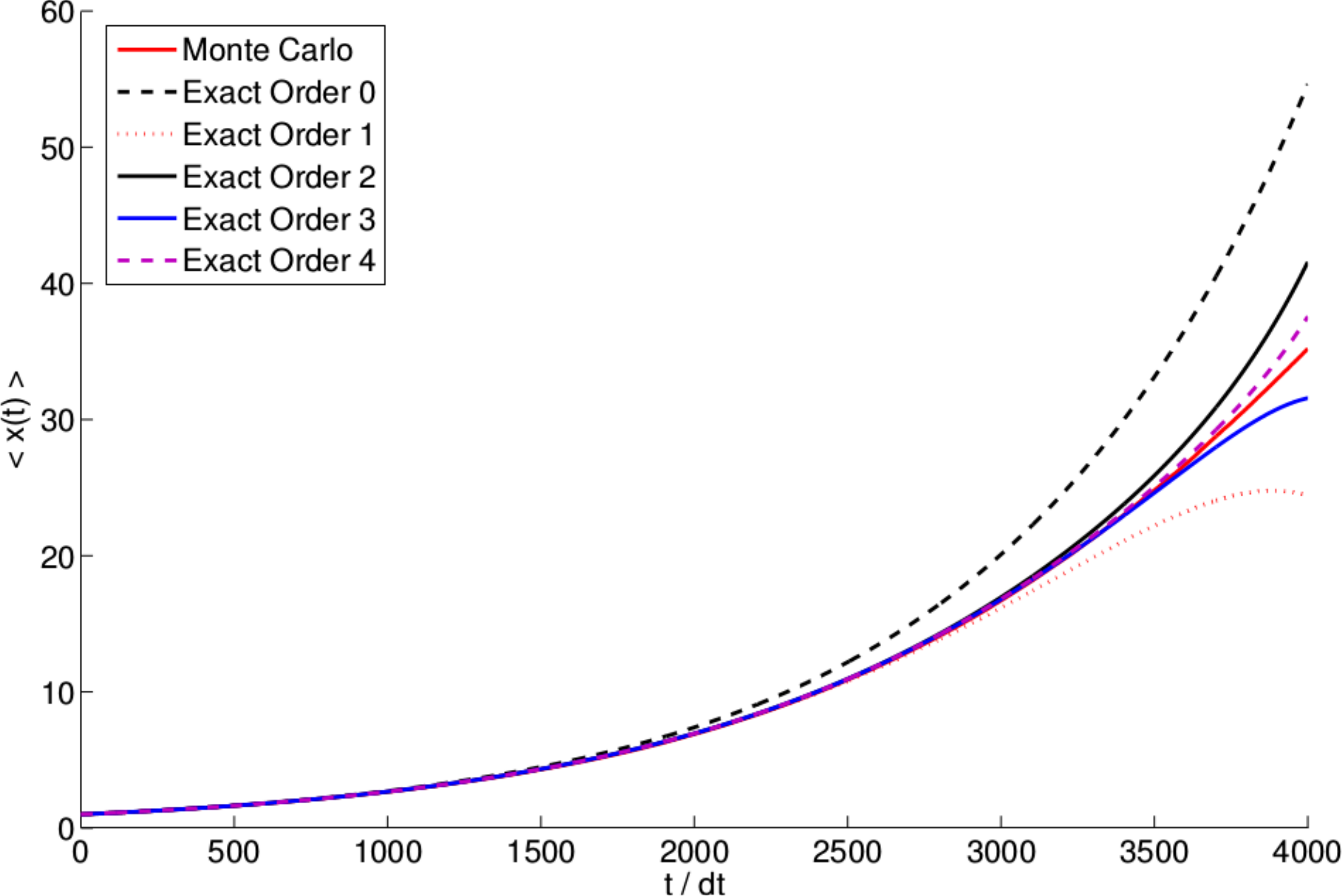}
\caption{Plot of the mean of solution of the logistic stochastic differential equations for $\mu=1$, $\sigma=0.1$, $\tilde x=100$, and with $dt=10^{-3}$, solved numerically using a stochastic Euler method and averaged over 1000 simulations (solid red), versus the analytical solution obtained at $k$-th order, considering $\xi=\frac{x_0}{\tilde x}$ the perturbative parameter. We can observe that at higher order we obtain a solution closer to the one simulated.}
\label{fig:simulations}
\end{figure}

\section{Conclusions}
In this paper, we have presented new exact formulas for the moments of the integrated exponential Brownian motion in terms of sums and determinants, and based on recent results obtained in \cite{Dets}. We described a simple graphical method to evaluate them, based on a recurrence relation. 

Exact formulas were proved in \cite{Yor1} in terms of polynomials. In this paper however, we have taken an alternative route based on combinatorics. After realizing that the mean can be evaluated exactly using the properties of Gaussian integrals, and after observing that these moments feature a recurrence relation, we have shown that exact expressions can be obtained via a combinatorial argument. These exact expressions were then observe to be equivalent to evaluating the determinant of a specific linear operator which depends on the order of the moment.

To complete the presentation, we have applied the formulas to the exact solution of the logistic 
stochastic differential equation. There, the evaluation of the ensemble expectation values
of certain observables can be carried out with our method, via Taylor expansion.
In particular, the comparison of the mean solution obtained by means of Monte Carlo simulations 
\footnote{The numerical integration of the differential equation is performed by a stochastic Euler method.} with our method shows that our formula permits to approximate to a higher precision 
some ensemble averages of properties of the solution of the stochastic differential equation.

\section*{Aknowledgements}
F. C. would like to thank the London Institute for Mathematical Sciences for hospitality. S. S. is supported by the Royal Society and EPSRC. 

\newpage
\appendix
\section{Exact formulas for $n\leq 4$} \label{sec:appendix}
\begin{sideways}
\parbox{\textheight}{
Below we report the functions $s_k(\mu,\sigma,t)\equiv s_k(\lambda_1,\cdots,\lambda_k)$ and $\tilde s_k(\mu,\sigma,t)\equiv s_k(\tilde \lambda_1,\cdots,\tilde \lambda_k)$ up to $k=4$.\\
\begin{eqnarray}
s_1(\mu,\sigma,t)&=&\frac{e^{\mu  t}-1}{\mu  t}; \ \ s_2(\mu,\sigma,t)=\frac{\frac{e^{t \left(2 \mu +\sigma ^2\right)}-1}{2 \mu +\sigma ^2}+\frac{1-e^{\mu
   t}}{\mu }}{t^2 \left(\mu +\sigma ^2\right)}; \\
 s_3(\mu,\sigma,t)&=& \frac{2 \mu ^2 \left(e^{3 t \left(\mu +\sigma ^2\right)}-3 e^{t \left(2 \mu +\sigma
   ^2\right)}+3 e^{\mu  t}-1\right)+6 \sigma ^4 \left(e^{\mu  t}-1\right)+\mu  \sigma ^2
   \left(e^{3 t \left(\mu +\sigma ^2\right)}-9 e^{t \left(2 \mu +\sigma ^2\right)}+15
   e^{\mu  t}-7\right)}{3 \mu  t^3 \left(\mu +\sigma ^2\right) \left(2 \mu +\sigma
   ^2\right) \left(\mu +2 \sigma ^2\right) \left(2 \mu +3 \sigma ^2\right)};\\
s_4(\mu,\sigma,t)&=&\frac{3 \mu ^2 \sigma ^2
   \left(-8 e^{3 t \left(\mu +\sigma ^2\right)}+18 e^{t \left(2 \mu +\sigma
   ^2\right)}+e^{4 \mu  t+6 \sigma ^2 t}-16 e^{\mu  t}+5\right)-30 \sigma ^6 \left(e^{\mu
    t}-1\right)}{6   \mu  t^4 \left(\mu +\sigma ^2\right) \left(2 \mu +\sigma ^2\right) \left(\mu +2 \sigma
   ^2\right) \left(\mu +3 \sigma ^2\right) \left(2 \mu +3 \sigma ^2\right) \left(2 \mu +5
   \sigma ^2\right)}  \nonumber \\
&+&\frac{2 \mu ^3 \left(-4 e^{3 t \left(\mu +\sigma ^2\right)}+6 e^{t \left(2 \mu +\sigma
   ^2\right)}+e^{4 \mu  t+6 \sigma ^2 t}-4 e^{\mu  t}+1\right)  }{6
   \mu  t^4 \left(\mu +\sigma ^2\right) \left(2 \mu +\sigma ^2\right) \left(\mu +2 \sigma
   ^2\right) \left(\mu +3 \sigma ^2\right) \left(2 \mu +3 \sigma ^2\right) \left(2 \mu +5
   \sigma ^2\right)}  \nonumber \\
   &+& \frac{\mu  \sigma ^4 \left(-10 e^{3 t \left(\mu +\sigma ^2\right)}+54 e^{t
   \left(2 \mu +\sigma ^2\right)}+e^{4 \mu  t+6 \sigma ^2 t}-82 e^{\mu  t}+37\right)}{6
   \mu  t^4 \left(\mu +\sigma ^2\right) \left(2 \mu +\sigma ^2\right) \left(\mu +2 \sigma
   ^2\right) \left(\mu +3 \sigma ^2\right) \left(2 \mu +3 \sigma ^2\right) \left(2 \mu +5
   \sigma ^2\right)};     \\
\tilde s_1(\mu,\sigma,t)&=&\frac{e^{t \left(\mu +\sigma ^2\right)}-1}{t \left(\mu +\sigma ^2\right)}; \ \ \tilde s_2(\mu,\sigma,t)= \frac{\frac{e^{2 \mu  t+3 \sigma ^2 t}-1}{2 \mu +3 \sigma ^2}-\frac{e^{t \left(\mu
   +\sigma ^2\right)}-1}{\mu +\sigma ^2}}{t^2 \left(\mu +2 \sigma ^2\right)}; \\
\tilde s_3(\mu,\sigma,t)&=& \frac{-\frac{\mu +3 \sigma ^2}{2 \mu ^2+5 \mu  \sigma ^2+3 \sigma ^4}+\left(\frac{2}{\mu
   +\sigma ^2}+\frac{1}{-2 \mu -5 \sigma ^2}\right) e^{t \left(\mu +\sigma
   ^2\right)}-\frac{3 e^{2 \mu  t+3 \sigma ^2 t}}{2 \mu +3 \sigma ^2}+\frac{e^{3 t
   \left(\mu +2 \sigma ^2\right)}}{2 \mu +5 \sigma ^2}}{3 t^3 \left(\mu +2 \sigma
   ^2\right) \left(\mu +3 \sigma ^2\right)} \\
   \tilde s_4(\mu,\sigma,t)&=& \frac{\frac{2}{2 \mu ^2+9 \mu  \sigma ^2+10 \sigma ^4}-\frac{1}{4 \mu ^2+24 \mu  \sigma
   ^2+35 \sigma ^4}-\frac{6 \left(\mu +4 \sigma ^2\right)}{4 \mu ^3+28 \mu ^2 \sigma
   ^2+61 \mu  \sigma ^4+42 \sigma ^6}+\frac{2 \left(\mu +4 \sigma ^2\right)}{\left(\mu
   +\sigma ^2\right) \left(\mu +2 \sigma ^2\right) \left(2 \mu +5 \sigma ^2\right)}}{6 t^4 \left(\mu +3 \sigma ^2\right) \left(\mu +4 \sigma
   ^2\right)} \nonumber \\
   &+&\frac{ e^{t
   \left(\mu +\sigma ^2\right)} \left(-\frac{2 \left(\mu +4 \sigma ^2\right)}{\left(\mu
   +\sigma ^2\right) \left(\mu +2 \sigma ^2\right) \left(2 \mu +5 \sigma
   ^2\right)}-\frac{2 e^{t \left(2 \mu +5 \sigma ^2\right)}}{2 \mu ^2+9 \mu  \sigma ^2+10
   \sigma ^4}+\frac{e^{3 t \left(\mu +3 \sigma ^2\right)}}{4 \mu ^2+24 \mu  \sigma ^2+35
   \sigma ^4}+\frac{6 \left(\mu +4 \sigma ^2\right) e^{t \left(\mu +2 \sigma
   ^2\right)}}{\left(\mu +2 \sigma ^2\right) \left(2 \mu +3 \sigma ^2\right) \left(2 \mu
   +7 \sigma ^2\right)}\right)}{6 t^4 \left(\mu +3 \sigma ^2\right) \left(\mu +4 \sigma
   ^2\right)};
\end{eqnarray}
}
\end{sideways}
\end{document}